\newtheorem{lemma}{Lemma}
\newtheorem{proposition}{Proposition}
\theoremstyle{definition}
\newtheorem{definition}{Definition}
\theoremstyle{remark}
\newtheorem{remark}{Remark}
\tikzstyle{block} = [draw, thick, rectangle, fill=white!20, minimum height=3em, minimum width=6em]
\tikzstyle{square} = [draw, thick, rectangle, fill=white!20, minimum height = 3em, minimum width = 3em]
\tikzstyle{dotbox} = [draw, dashed, thick, rectangle, fill=white!20, minimum height = 3em, minimum width = 3em]
\tikzstyle{coord} = [coordinate]
\newcommand{\sign}{\mathrm{sign}}
\title{\LARGE \bf How Much Reserve Fuel: Quantifying the Maximal Energy Cost\\of System Disturbances}
\author{Ram Padmanabhan$^{*,1,4}$, Craig Bakker$^2$, Siddharth Abhijit Dinkar$^1$, and Melkior Ornik$^{3,4}$
\thanks{This work was supported by the Resilience through Data-driven Intelligently-Designed Control (RD2C) Initiative under the Laboratory Directed Research and Development (LDRD) Program at Pacific Northwest National Laboratory (PNNL), Air Force Office of Scientific Research grant FA9550-23-1-0131 and NASA University Leadership Initiative grant 80NSSC22M0070.}
\thanks{$^*$Corresponding Author. Email: {ramp3@illinois.edu}}
\thanks{$^1$Department of Electrical and Computer Engineering, University of Illinois Urbana-Champaign, Urbana, IL, USA.}%
\thanks{$^2$National Security Directorate, Pacific Northwest National Laboratory, Richland, WA, USA.}%
\thanks{$^3$Department of Aerospace Engineering, University of Illinois Urbana-Champaign, Urbana, IL, USA.}%
\thanks{$^4$Coordinated Science Laboratory, University of Illinois Urbana-Champaign, Urbana, IL, USA.}
}
\begin{document}

\maketitle

\begin{abstract}
Motivated by the design question of additional fuel needed to complete a task in an uncertain environment, this paper introduces metrics to quantify the maximal additional energy used by a control system in the presence of bounded disturbances when compared to a nominal, disturbance-free system. In particular, we consider the task of finite-time stabilization for a linear time-invariant system. We first derive the nominal energy required to achieve this task in a disturbance-free system, and then the worst-case energy over all feasible disturbances. The latter leads to an optimal control problem with a least-squares solution, and then an infinite-dimensional optimization problem where we derive an upper bound on the solution. The comparison of these energies is accomplished using additive and multiplicative metrics, and we derive analytical bounds on these metrics. Simulation examples on an ADMIRE fighter jet model demonstrate the practicability of these metrics, and their variation with the task hardness, a combination of the distance of the initial condition from the origin and the task completion time.
\end{abstract}

\section{Introduction} \label{sec:Introduction}
Real-world control systems are almost always subjected to external disturbances and noise in uncertain environments. When operating in safety- or mission-critical scenarios, these perturbations can prevent a system from achieving a specified performance objective, such as reachability or safety \cite{Yan23}. Furthermore, attempting to directly compensate for these perturbations while still satisfying a performance objective requires additional fuel in practical systems \cite{Na20}, which is closely related to additional energy in the control signal.

The setting of robust control theory \cite{RC1, RC2} focuses on designing controllers for systems that operate in environments with external perturbations, regardless of the characteristics of these perturbations, guaranteeing the notion of \emph{strong reachability}. The design of such control laws has been widely studied \cite{BR71, RKML06}, and is motivated by the fact that these external disturbances are not known \emph{a priori}. A variant of the robust control setting considers the \emph{full information} problem, where the controller has complete knowledge of plant states and disturbances \cite{FI1, FI2}. The design of optimal controllers in such uncertain environments typically focuses on minimizing robustness metrics such as the $\mathcal{H}_2$ or $\mathcal{H}_{\infty}$ norms of the closed-loop system \cite{RC1}. These metrics measure the variance or energy amplification from the disturbance to the state or output. In contrast, we consider the control energy expended to reach a target set. Compensating for system disturbances generally requires additional control energy compared to the ideal, disturbance-free setting. This detail is particularly relevant in vehicular systems \cite{Na20}, where control energy is closely related to the amount of expended fuel. The design of controllers for fuel economy is a well-researched topic, with studies on marine vehicles \cite{Ship12}, aircraft \cite{Elias22}, autonomous vehicles \cite{MS16} and hybrid vehicles \cite{HV1, HV2}. When subjected to external perturbations, such systems must consider the additional fuel expended to compensate for these perturbations while achieving a target.

Based on these considerations, in this paper we introduce metrics to quantify the maximal additional control energy used by a control system to achieve a target, in the presence of bounded disturbances. The metrics we introduce are closely related to the notion of \emph{quantitative resilience} introduced by Bouvier \emph{et al.}\cite{BXO21, BO23}. Quantitative resilience refers to the maximal ratio of minimal reach times required to reach a target for a nominal system and a system which partially loses control authority over actuators. Our metrics consider minimal control energy instead of minimal reach time and disturbances instead of malfunctioning actuators. Considering the task of finite-time stabilization, we compare the control energy required to achieve this task in an nominal, disturbance-free linear time-invariant (LTI) system to the worst-case energy required in a system subjected to bounded disturbances. We note that finding this worst-case energy only requires knowledge of the bounds on the disturbance, and this constitutes our main results. Our first result derives an upper bound on this worst-case energy over all feasible disturbances, based on an optimal control problem with a least-squares solution and then an infinite-dimensional optimization problem. Next, we define two metrics that encapsulate the \emph{cost of disturbance}, in terms of the additional control energy required to achieve the task in a disturbed system compared to the ideal system. We derive bounds on these metrics and illustrate their applicability on an ADMIRE fighter jet model \cite{SDRA}, demonstrating their variation with the initial condition distance from the origin as well as a metric quantifying the difficulty of completing the task. 

The remainder of this paper is organized as follows. Section \ref{sec:Preliminaries} provides fundamental notation and useful mathematical facts, and introduces our problem formulation, defining the control energies we are interested in. In Section \ref{sec:Energies}, we derive the optimal control signals and control energies based on these definitions, and obtain an upper bound on the worst-case control energy over all disturbances. We introduce the cost of disturbance metrics in Section \ref{sec:Metrics}, and derive bounds on these quantities. Finally, in Section \ref{sec:Example} we illustrate the performance of the control laws and the practicality of the metrics we define. In particular, we show that for tasks with a large distance of initial condition from the origin or a small final time, our metrics are useful in determining the additional energy required to achieve the given task.

\section{Preliminaries} \label{sec:Preliminaries}
\subsection{Notation and Facts}
The $\sign(\cdot)$ function is defined on $z \in \mathbb{R}$ as $\sign(z) = z/|z| \in \{-1, +1\}$ if $z \neq 0$, with $\sign(0) = 0$. If $z \in \mathbb{R}^n$, the $\sign(\cdot)$ function operates elementwise on $z$. The $p$-norm of a vector $x \in \mathbb{R}^n$ is defined as $\|x\|_p \coloneqq \left(\sum_{i=1}^{n}|x_i|^p\right)^{1/p}$, with $\|x\|_{\infty} \coloneqq \max_i |x_i|$. For a matrix $L \in \mathbb{R}^{p\times q}$ with entries indexed $l_{ij}$, define the induced matrix norms $\|L\|_1 \coloneqq \max_j \sum_{i=1}^{p} |l_{ij}|$ and $\|L\|_{\infty} \coloneqq \max_i \sum_{j=1}^{q} |l_{ij}|$. The sub-multiplicative property of matrix norms states that for any two matrices $M$ and $N$ such that their product $MN$ can be defined, $\|MN\| \leq \|M\|\|N\|$, where $\|\cdot\|$ is any induced norm. The minimum and maximum eigenvalues of a symmetric, positive-definite matrix $P \in \mathbb{R}^{n\times n}$ are denoted $\lambda_{\min}(P)$ and $\lambda_{\max}(P)$. For any vector $x \in \mathbb{R}^n$, the inequality $\lambda_{\min}(P)\|x\|_{2}^{2} \leq x^TPx \leq \lambda_{\max}(P)\|x\|_{2}^{2}$ holds. 

For a continuous function $u:[0, t_f]\to \mathbb{R}^p$, the $\mathcal{L}_2$ norm is defined as $\|u\|_{\mathcal{L}_2} \coloneqq \sqrt{\int_{t=0}^{t_f}\|u(t)\|_{2}^{2}~\mathrm{d}t}$. For a convex function $\varphi:\mathbb{R}^p\to\mathbb{R}$, Jensen's inequality \cite[Chapter 1]{NPBook} states $\varphi\left(\frac{1}{t_f}\int_{0}^{t_f}u(t)\mathrm{d}t\right) \leq \frac{1}{t_f}\int_{0}^{t_f}\varphi(u(t))\mathrm{d}t$. If $\varphi(\cdot) = \|\cdot\|$ is any norm, Jensen's inequality reduces to $\left\|\int_{0}^{t_f}u(t)\mathrm{d}t\right\| \leq \int_{0}^{t_f}\|u(t)\|\mathrm{d}t$, since all norms are convex and absolutely homogeneous.

\subsection{Problem Formulation}
Consider a linear time-invariant system affected by a bounded disturbance:
\begin{equation} \label{eq:Disturbed}
	\dot{x}(t) = Ax(t) + Bu(t) + w(t), ~~x(0) = x_0 \neq 0,
\end{equation}
where $x(t) \in \mathbb{R}^n$, $u(t) \in \mathbb{R}^p$, $w(t) \in \mathbb{R}^n$ and the matrices $A$ and $B$ have appropriate dimensions. The pair $(A, B)$ is controllable, and the disturbance $w(t)$ is pointwise bounded, such that $\|w(t)\|_{\infty} \leq \overline{w}$ for all $t$ with $\overline{w} > 0$. Let $\mathcal{W}$ denote the space of all such pointwise bounded functions. Our objective is the task of finite-time stabilization, achieving $x(t_f) = 0$ from an initial condition $x(0) = x_0 \neq 0$, for a given final time $t_f$, based on design specifications. The space of inputs $\mathcal{U}$ consists of all square-integrable functions over the interval $[0, t_f]$ such that the solution to \eqref{eq:Disturbed} exists \cite{Khalil}.

We note that exactly achieving this finite-time specification requires \emph{a priori} knowledge of $w(t)$ in order to compensate for its effect using the control $u(t)$. This is further discussed in Lemma~\ref{lem:uD2}. Our focus is not on designing this control, but on quantifying the worst-case additional control energy required to achieve our specification, given that $\overline{w}$ is known. In this setting, the nominal, disturbance-free model
\begin{equation} \label{eq:Nominal}
	\dot{x}(t) = Ax(t) + Bu(t), ~~x(0) = x_0 \neq 0,
\end{equation}
is considered as the baseline. In both the actual \eqref{eq:Disturbed} and nominal \eqref{eq:Nominal} dynamics, we are interested in the minimal energy required to achieve finite-time stabilization. To this end, we make the following definitions:

\begin{definition}[Nominal Energy] \label{def:Nominal}
The \emph{nominal energy} is the minimum energy in the control signal required to achieve $x(t_f) = 0$ from $x(0) = x_0$ following the nominal, disturbance-free dynamics \eqref{eq:Nominal}:
\begin{equation} \label{eq:EN_def}
E_{N}^{*}(x_0, t_f) \coloneqq \inf_u \left\{\|u\|_{\mathcal{L}_{2}}^{2}~\mathrm{ s.t. }~x(t_f) = 0~\mathrm{ using ~\eqref{eq:Nominal} }\right\}.
\end{equation}
\end{definition}

\begin{definition}[Disturbed Energy] \label{def:Disturbed}
The \emph{disturbed energy} is the minimal energy in the control signal required to achieve $x(t_f) = 0$ from $x(0) = x_0$ following the actual dynamics \eqref{eq:Disturbed}, when $w(t)$ is chosen to maximize this quantity:
\begin{equation} \label{eq:ED_def}
E_{D}^{*}(x_0, t_f) \coloneqq \sup_{w \in \mathcal{W}} \left\{\inf_u \left\{\|u\|_{\mathcal{L}_{2}}^{2}~\mathrm{ s.t. }~x(t_f) = 0~\mathrm{ using ~\eqref{eq:Disturbed} }\right\} \right\}.
\end{equation}
In other words, the disturbed energy refers to the \emph{worst-case minimum energy}.
\end{definition}

We note that Bouvier and Ornik \cite{BO20, BO22b} discuss finite-time reachability under energy constraints on the input. However, this is in the context of losing control authority over actuators, whereas we consider external disturbances. Further, they consider energy bounds on the malfunctioning actuator inputs, whereas we consider amplitude bounds on the disturbance. We also introduce metrics to quantify the increased energy required under external disturbances in Section \ref{sec:Metrics}, while this is not addressed in the above papers.

\section{Nominal and Disturbed Energies} \label{sec:Energies}
In this section, we discuss the minimum-energy control signals to achieve finite-time stabilization in both the actual \eqref{eq:Disturbed} and nominal \eqref{eq:Nominal} systems. Using these signals, we obtain the nominal and disturbed energies defined in \eqref{eq:EN_def} and \eqref{eq:ED_def}. 

\begin{lemma}[Nominal Energy]{\cite[Chapter 3]{LVS}} \label{lem:EN}
The minimum-energy control signal that achieves finite-time stabilization in the nominal system \eqref{eq:Nominal} is
\begin{equation} \label{eq:Nominal_u}
u_N(t) = -B^Te^{A^T(t_f - t)}W_{B}^{-1}e^{At_f}x_0
\end{equation}
for $t \in [0, t_f]$, where $W_B$ is the finite-horizon controllability Gramian:
\begin{equation} \label{eq:WB}
W_B \coloneqq \int_{0}^{t_f} e^{At}BB^Te^{A^Tt}~\mathrm{d}t \succ 0.
\end{equation}
The corresponding minimum energy is given by
\begin{equation} \label{eq:EN}
	E_{N}^{*}(x_0, t_f) = \|u_N\|_{\mathcal{L}_2}^{2} = x_{0}^{T}e^{A^Tt_f}W_{B}^{-1}e^{At_f}x_0.
\end{equation}
\end{lemma}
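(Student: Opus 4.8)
The plan is to reduce the terminal constraint $x(t_f)=0$ to a single linear equation on the input and then solve the resulting minimum-norm problem in the Hilbert space $\mathcal{L}_2([0,t_f];\mathbb{R}^p)$. First I would write the solution of \eqref{eq:Nominal} via the variation-of-constants formula, $x(t_f) = e^{At_f}x_0 + \int_{0}^{t_f} e^{A(t_f-t)}Bu(t)\,\mathrm{d}t$, so that $x(t_f)=0$ becomes $\Phi u = -e^{At_f}x_0$, where $\Phi:\mathcal{L}_2\to\mathbb{R}^n$ is the bounded linear operator $\Phi u \coloneqq \int_{0}^{t_f} e^{A(t_f-t)}Bu(t)\,\mathrm{d}t$. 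Its Hilbert adjoint is $(\Phi^*\xi)(t) = B^Te^{A^T(t_f-t)}\xi$, and a direct computation gives $\Phi\Phi^* = \int_{0}^{t_f} e^{A(t_f-t)}BB^Te^{A^T(t_f-t)}\,\mathrm{d}t$, which coincides with $W_B$ after the change of variables $t\mapsto t_f-t$ in \eqref{eq:WB}.

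Next I would invoke controllability of $(A,B)$ to conclude $W_B = \Phi\Phi^* \succ 0$, hence invertible; this simultaneously guarantees feasibility of the constraint and yields the minimum-norm solution in closed form. By the standard minimum-norm (pseudoinverse / projection) argument, the least-$\mathcal{L}_2$-norm solution of $\Phi u = -e^{At_f}x_0$ is $u_N = \Phi^*(\Phi\Phi^*)^{-1}(-e^{At_f}x_0) = -B^Te^{A^T(t_f-t)}W_B^{-1}e^{At_f}x_0$, which is \eqref{eq:Nominal_u}. To keep the optimality self-contained I would note that any other feasible $u$ satisfies $\Phi(u-u_N)=0$ while $u_N\in\mathrm{range}(\Phi^*)=(\ker\Phi)^\perp$, so $\langle u-u_N, u_N\rangle_{\mathcal{L}_2}=0$ and therefore $\|u\|_{\mathcal{L}_2}^2 = \|u_N\|_{\mathcal{L}_2}^2 + \|u-u_N\|_{\mathcal{L}_2}^2 \ge \|u_N\|_{\mathcal{L}_2}^2$, with equality iff $u=u_N$.

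Finally I would evaluate the energy directly: $\|u_N\|_{\mathcal{L}_2}^2 = \int_{0}^{t_f} u_N(t)^Tu_N(t)\,\mathrm{d}t = x_0^Te^{A^Tt_f}W_B^{-1}\!\left(\int_{0}^{t_f} e^{A(t_f-t)}BB^Te^{A^T(t_f-t)}\,\mathrm{d}t\right)\!W_B^{-1}e^{At_f}x_0$, and since the bracketed integral is precisely $W_B$, this collapses to $x_0^Te^{A^Tt_f}W_B^{-1}e^{At_f}x_0$, giving \eqref{eq:EN}. The only genuinely delicate point is the infinite-dimensional minimum-norm step, namely justifying that $\mathrm{range}(\Phi^*)$ is closed and equals $(\ker\Phi)^\perp$ so that the orthogonal decomposition above is valid; this is immediate here because $\Phi$ has finite-dimensional range and $\Phi\Phi^*$ is invertible, but it is the step I would be most careful to state precisely. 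Everything else is routine linear algebra together with a change of variables, and the result is classical (see \cite[Chapter 3]{LVS}).
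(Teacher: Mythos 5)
Your proof is correct and follows exactly the least-squares-in-function-spaces argument that the paper invokes for this lemma (and spells out in its proof of Lemma~\ref{lem:uD2}): form the constraint operator $\Phi$, compute its adjoint, identify $\Phi\Phi^* = W_B$ via a change of variables, and take the minimum-norm solution $\Phi^*(\Phi\Phi^*)^{-1}z$. The ``delicate point'' you flag is in fact already disposed of by your own orthogonality computation, since $\langle u - u_N, \Phi^*\xi\rangle_{\mathcal{L}_2} = \langle \Phi(u-u_N), \xi\rangle = 0$ for every feasible $u$ gives Pythagoras directly, with no need for a closed-range argument.
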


The well-known result in Lemma \ref{lem:EN} follows from a least-squares argument in function spaces. We can similarly obtain an expression for the inner infimum in \eqref{eq:ED_def}.

\begin{lemma} \label{lem:uD2}
The minimum-energy control signal that achieves finite-time stabilization in the actual system \eqref{eq:Disturbed} is
\begin{equation} \label{eq:Disturbed_u}
u_D(t) = -B^Te^{A^T(t_f - t)}W_{B}^{-1}\Bigg[e^{At_f}x_0 + {\mathcal{R}(w, t_f)}\Bigg]
\end{equation}
for $t \in [0, t_f]$, where $\mathcal{R}(w, t_f) = {\int_{0}^{t_f} e^{A(t_f-\tau)}w(\tau)~\mathrm{d}\tau}$ and $W_B$ is defined in \eqref{eq:WB}. The energy in this control signal is given by
\begin{align}
	&\|u_D\|_{\mathcal{L}_2}^2 = \int_{0}^{t_f}u_{D}^{T}(t)u_D(t)~\mathrm{d}t \nonumber \\
	&\hspace{0.2em}= \left[e^{At_f}x_0 + \mathcal{R}(w, t_f)\right]^T W_{B}^{-1} \left[e^{At_f}x_0 + \mathcal{R}(w, t_f)\right]. \label{eq:uD2}
\end{align}
\end{lemma}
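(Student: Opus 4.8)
The plan is to reduce the computation to the nominal least-squares problem of Lemma~\ref{lem:EN} by absorbing the (fixed) disturbance into the terminal constraint. First I would write the variation-of-constants solution of \eqref{eq:Disturbed} at the terminal time,
\[
x(t_f) = e^{At_f}x_0 + \int_{0}^{t_f} e^{A(t_f-\tau)}Bu(\tau)\,\mathrm{d}\tau + \mathcal{R}(w,t_f),
\]
and impose $x(t_f)=0$. Since $w$ is held fixed while the inner infimum over $u$ is taken, the term $\mathcal{R}(w,t_f)$ is a well-defined constant vector (the integrand is bounded on a finite interval), so the feasibility condition on $u$ becomes the linear constraint $\int_{0}^{t_f} e^{A(t_f-\tau)}Bu(\tau)\,\mathrm{d}\tau = -\big[e^{At_f}x_0 + \mathcal{R}(w,t_f)\big]$. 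This is structurally identical to the nominal terminal constraint, with $e^{At_f}x_0$ replaced by $e^{At_f}x_0 + \mathcal{R}(w,t_f)$.

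Next I would invoke the same least-squares argument in $\mathcal{L}_2([0,t_f];\mathbb{R}^p)$ used to prove Lemma~\ref{lem:EN}: the minimum-$\mathcal{L}_2$-norm control satisfying a linear terminal constraint $\int_{0}^{t_f} e^{A(t_f-\tau)}Bu(\tau)\,\mathrm{d}\tau = \xi$ is the adjoint (pseudoinverse) solution $u(t) = B^T e^{A^T(t_f-t)}W_B^{-1}\xi$, where $W_B \succ 0$, hence invertible, by controllability of $(A,B)$ and the definition \eqref{eq:WB}. Substituting $\xi = -\big[e^{At_f}x_0 + \mathcal{R}(w,t_f)\big]$ yields \eqref{eq:Disturbed_u}. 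One should also note that this $u_D$ lies in $\mathcal{U}$: for fixed $w$ it is continuous in $t$, hence square-integrable, and \eqref{eq:Disturbed} then has a well-defined solution.

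Finally I would compute the energy directly. Writing $\xi = e^{At_f}x_0 + \mathcal{R}(w,t_f)$ so that $u_D(t) = -B^T e^{A^T(t_f-t)}W_B^{-1}\xi$, we get $u_D^T(t)u_D(t) = \xi^T W_B^{-1} e^{A(t_f-t)} BB^T e^{A^T(t_f-t)} W_B^{-1}\xi$. Integrating over $[0,t_f]$ and using the change of variables $s = t_f - t$ gives $\int_{0}^{t_f} e^{A(t_f-t)}BB^Te^{A^T(t_f-t)}\,\mathrm{d}t = \int_{0}^{t_f} e^{As}BB^Te^{A^Ts}\,\mathrm{d}s = W_B$, so $\|u_D\|_{\mathcal{L}_2}^2 = \xi^T W_B^{-1} W_B W_B^{-1}\xi = \xi^T W_B^{-1}\xi$, which is exactly \eqref{eq:uD2}.

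The only real subtlety — the part worth writing carefully — is the optimality claim, namely that the adjoint solution genuinely minimizes over the infinite-dimensional space $\mathcal{U}$ rather than being merely feasible. I would argue as in the classical minimum-energy derivation cited for Lemma~\ref{lem:EN}: any other feasible input is $u = u_D + \delta$ with $\delta$ in the kernel of the constraint map, i.e.\ $\int_{0}^{t_f} e^{A(t_f-\tau)}B\delta(\tau)\,\mathrm{d}\tau = 0$; since $u_D$ lies in the range of the adjoint, $\langle u_D, \delta\rangle_{\mathcal{L}_2} = -\xi^T W_B^{-1}\int_{0}^{t_f} e^{A(t_f-t)}B\delta(t)\,\mathrm{d}t = 0$, whence $\|u\|_{\mathcal{L}_2}^2 = \|u_D\|_{\mathcal{L}_2}^2 + \|\delta\|_{\mathcal{L}_2}^2 \ge \|u_D\|_{\mathcal{L}_2}^2$. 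Everything else is routine linear-systems bookkeeping.
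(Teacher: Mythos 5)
Your proposal is correct and follows essentially the same route as the paper: write the variation-of-constants formula, absorb the fixed disturbance term $\mathcal{R}(w,t_f)$ into the terminal constraint, and apply the infinite-dimensional least-squares (adjoint/pseudoinverse) solution with $\mathcal{MM}^* = W_B$. Your explicit orthogonality argument for optimality ($\|u_D+\delta\|_{\mathcal{L}_2}^2 = \|u_D\|_{\mathcal{L}_2}^2 + \|\delta\|_{\mathcal{L}_2}^2$ for $\delta$ in the kernel of the constraint map) is a welcome addition that the paper leaves implicit by analogy to the finite-dimensional case.
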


\begin{proof}
The proof uses a similar least-squares argument to that of Lemma \ref{lem:EN}, and we provide a brief description here. Note that the trajectory of the state in \eqref{eq:Disturbed} is given by:
$$
x(t) = e^{At}x_0 + \int_{0}^{t} e^{A(t - \tau)}Bu(\tau)~\mathrm{d}\tau + \underbrace{\int_{0}^{t} e^{A(t-\tau)}w(\tau)~\mathrm{d}\tau}_{\mathcal{R}(w, t)},
$$
and this follows from similar arguments to the solution of a linear control system \cite[Section 4.4]{Khalil}. Setting $t = t_f$ and $x(t_f) = 0$, the minimum-energy control problem is written as:
\begin{subequations}
\begin{align}
&\inf_u \int_{0}^{t_f} u^T(t) u(t)~\mathrm{d}t \label{eq:P1} \\
&~\text{s.t. } -e^{At_f}x_0 - \mathcal{R}(w, t_f) = \int_{0}^{t_f} e^{A(t_f - \tau)}Bu(\tau)~\mathrm{d}\tau. \label{eq:C1}
\end{align}
\end{subequations}
From \eqref{eq:C1}, it is easy to see that the optimal solution must depend on $\mathcal{R}(w, t_f)$, and thus requires \emph{a priori} knowledge of $w(t)$. The above problem is an infinite-dimensional counterpart of the underdetermined Euclidean least-squares problem:
\begin{align*}
\min_y \|y\|_{2}^{2} ~~\text{ s.t. } z = My,
\end{align*}
whose solution is given by $y^* = M^T(MM^T)^{-1}z$, with $\|y^*\|_{2}^{2} = z^T(MM^T)^{-1}z$. In the infinite-dimensional case, the transpose operator is written as an \emph{adjoint} operator. Comparing the constraints above, we note that the right-hand-side in \eqref{eq:C1} is a linear operator $\mathcal{M}:\mathcal{U}\to\mathbb{R}^n$ on the function $u(t)$, and $z = -e^{At_f}x_0 - \mathcal{R}(w, t_f)$. Let $\mathcal{M}^*:\mathbb{R}^n\to\mathcal{U}$ denote its adjoint operator, which satisfies $\langle \overline{z}, \mathcal{M}(u)\rangle = \langle \mathcal{M}^*(\overline{z}), u\rangle$ for all $\overline{z} \in \mathbb{R}^n$, where the notation $\langle \cdot,\cdot\rangle$ refers to the inner product in the space of the arguments. Then,
\begin{align*}
\langle \overline{z}, \mathcal{M}(u)\rangle &= \left[\int_{0}^{t_f} e^{A(t_f - \tau)}Bu(\tau)~\mathrm{d}\tau\right]^T\overline{z} \\
&= \int_{0}^{t_f} u^T(t)B^Te^{A^T(t_f-\tau)}\overline{z}~\mathrm{d}\tau = \langle \mathcal{M}^*(\overline{z}), u\rangle,
\end{align*}
and thus $\mathcal{M}^*(\overline{z}) = B^Te^{A^T(t_f-t)}\overline{z}$. From the finite-dimensional solution, note that the optimal control can be written as $u_D(t) = \mathcal{M}^*\left(\mathcal{MM}^*\right)^{-1}z$. Thus,
$$
u_D(t) = -B^Te^{A^T(t_f - t)}W_{B}^{-1}\left[e^{At_f}x_0 + \mathcal{R}(w, t_f)\right],
$$
\begin{align*}
	&\text{and }~\|u_D\|_{\mathcal{L}_2}^2 \\
	&\hspace{0.2em}= \left[e^{At_f}x_0 + \mathcal{R}(w, t_f)\right]^T W_{B}^{-1} \left[e^{At_f}x_0 + \mathcal{R}(w, t_f)\right].
\end{align*}
\end{proof}	

Of course, when $w(t) \equiv 0$, we have $\mathcal{R}(w, t_f) = 0$ and $u_D(t) = u_N(t)$. Note that $\|u_D\|_{\mathcal{L}_2}^2$ depends on a specific $w(t)$ and is not the disturbed energy in Definition \ref{def:Disturbed}. The disturbed energy is given by the following result.

\begin{proposition}[Disturbed Energy] \label{prop:ED}
The disturbed energy $E_{D}^{*}(x_0, t_f)$ is bounded above as follows:
\begin{align} \label{eq:ED}
E_{D}^{*}(x_0, t_f) &= \sup_{w \in \mathcal{W}} \|u_D\|_{\mathcal{L}_2}^2 \nonumber \\
&\hspace{-1cm}\leq E_{N}^{*}(x_0, t_f) + 2\overline{q}\|\Lambda U^Te^{At_f}x_0\|_1 + \overline{q}^2\sum_{i = 1}^{n} \lambda_i,
\end{align}
where $W_{B}^{-1} = U\Lambda U^T$ is the spectral decomposition of $W_{B}^{-1}$, $\Lambda = \mathrm{diag}\{\lambda_1, \ldots, \lambda_n\} \succ 0$ and $\overline{q} = \overline{w}~\|U\|_1 \int_{0}^{t_f} \left\|e^{A(t_f - t)}\right\|_{\infty} \mathrm{d}t.$
\end{proposition}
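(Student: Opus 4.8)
The plan is to begin from the exact expression \eqref{eq:uD2} for $\|u_D\|_{\mathcal{L}_2}^2$ in Lemma~\ref{lem:uD2}, expand the quadratic form via the spectral decomposition of $W_B^{-1}$, and bound the two disturbance-dependent terms uniformly over $w \in \mathcal{W}$, so that the supremum in Definition~\ref{def:Disturbed} can be taken at the very end without further work.

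First I would abbreviate $\xi \coloneqq e^{At_f}x_0$ and expand \eqref{eq:uD2} as $\|u_D\|_{\mathcal{L}_2}^2 = \xi^T W_B^{-1}\xi + 2\,\xi^T W_B^{-1}\mathcal{R}(w,t_f) + \mathcal{R}(w,t_f)^T W_B^{-1}\mathcal{R}(w,t_f)$. By Lemma~\ref{lem:EN}, the first term is exactly $E_N^{*}(x_0,t_f)$. Substituting $W_B^{-1} = U\Lambda U^T$ and writing $a \coloneqq U^T\xi$ and $b \coloneqq U^T\mathcal{R}(w,t_f)$, the remaining terms become $2a^T\Lambda b = 2\sum_{i=1}^{n}\lambda_i a_i b_i$ and $b^T\Lambda b = \sum_{i=1}^{n}\lambda_i b_i^{2}$, since $\Lambda = \mathrm{diag}\{\lambda_1,\ldots,\lambda_n\}$.

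The crux is a uniform bound on $\|b\|_{\infty}$. Using sub-multiplicativity of induced norms together with the identity $\|U^T\|_{\infty} = \|U\|_1$, one gets $\|b\|_{\infty} = \|U^T\mathcal{R}(w,t_f)\|_{\infty} \leq \|U\|_1\,\|\mathcal{R}(w,t_f)\|_{\infty}$. Applying Jensen's inequality in its integral (triangle-inequality) form to the $\infty$-norm, and then sub-multiplicativity inside the integrand, gives $\|\mathcal{R}(w,t_f)\|_{\infty} \leq \int_{0}^{t_f}\|e^{A(t_f-\tau)}\|_{\infty}\|w(\tau)\|_{\infty}\,\mathrm{d}\tau \leq \overline{w}\int_{0}^{t_f}\|e^{A(t_f-\tau)}\|_{\infty}\,\mathrm{d}\tau$, where the last step uses the pointwise bound $\|w(\tau)\|_{\infty}\leq\overline{w}$. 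Hence $\|b\|_{\infty} \leq \overline{q}$ for every admissible $w$, with $\overline{q}$ as in the statement.

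Finally I would combine the pieces: since $\Lambda \succ 0$ and $|b_i|\leq\overline{q}$, we have $2\sum_i\lambda_i a_i b_i \leq 2\overline{q}\sum_i\lambda_i|a_i| = 2\overline{q}\|\Lambda a\|_1 = 2\overline{q}\|\Lambda U^Te^{At_f}x_0\|_1$ and $\sum_i\lambda_i b_i^{2} \leq \overline{q}^{2}\sum_i\lambda_i$. The resulting upper bound on $\|u_D\|_{\mathcal{L}_2}^2$ no longer depends on $w$, so taking $\sup_{w\in\mathcal{W}}$ preserves the inequality and yields \eqref{eq:ED}. I expect the only delicate point to be the uniform bound in the third paragraph: getting the induced-norm bookkeeping right when transposing $U$ (so that $\|U\|_1$, not $\|U\|_{\infty}$, appears) and applying Jensen's inequality to the $\infty$-norm so that the amplitude bound $\overline{w}$ factors cleanly out of the integral. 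Everything else is routine manipulation of the quadratic form.
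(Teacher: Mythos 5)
Your proposal is correct and takes essentially the same route as the paper: both expand the quadratic form \eqref{eq:uD2} via the spectral decomposition of $W_B^{-1}$, establish the uniform bound $\|U^T\mathcal{R}(w,t_f)\|_{\infty} \leq \overline{q}$ using Jensen's inequality and sub-multiplicativity, and then control the cross and quadratic terms to obtain \eqref{eq:ED}. The only cosmetic difference is that the paper phrases the final step as exactly solving the relaxed finite-dimensional maximization over $\|q\|_{\infty}\leq\overline{q}$ (with maximizer $\overline{q}\,\sign(p)$), whereas you bound the two terms directly; the resulting bounds coincide.
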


\begin{proof}
We require the solution $E_{D}^{*}(x_0, t_f)$ of the infinite-dimensional problem
\begin{equation} \label{eq:P2}
\sup_{w \in \mathcal{W}} \left[e^{At_f}x_0 + \mathcal{R}(w, t_f)\right]^T W_{B}^{-1} \left[e^{At_f}x_0 + \mathcal{R}(w, t_f)\right].
\end{equation}
Let $v = \mathcal{R}(w, t_f) = \int_{0}^{t_f} e^{A(t_f-t)}w(t)\mathrm{d}t \in \mathbb{R}^n$, and note that
\begin{align}
\|v\|_{\infty} &= \left\|\int_{0}^{t_f} e^{A(t_f-t)}w(t)\mathrm{d}t \right\|_{\infty} \nonumber \\
&\hspace{-1.2cm}\leq \int_{0}^{t_f} \left\|e^{A(t_f-t)}w(t)\right\|_{\infty}\mathrm{d}t \leq \underbrace{\overline{w} \int_{0}^{t_f} \left\|e^{A(t_f-t)}\right\|_{\infty}\mathrm{d}t}_{\overline{v}},
\end{align}
where we use Jensen's inequality and the sub-multiplicative property of norms. Here, $\overline{v}$ is a constant that is independent of the disturbance $w(t)$, and depends only on the bound $\overline{w}$. Then, \eqref{eq:P2} implies
\begin{align}
E_{D}^{*}(x_0, t_f) &\leq E_{N}^{*}(x_0, t_f) + \nonumber \\
&\sup_{\|v\|_{\infty} \leq \overline{v}} \left[2v^T W_{B}^{-1} e^{At_f}x_0 + v^T W_{B}^{-1} v\right], \label{eq:ED_v}
\end{align}
where we now have to solve a finite-dimensional problem. Since $W_B$ is symmetric and positive definite, let $W_{B}^{-1} = U\Lambda U^T$ be the spectral decomposition of $W_{B}^{-1}$, where $\Lambda = \mathrm{diag}\{\lambda_1, \ldots, \lambda_n\} \succ 0$ and $U$ is an orthogonal matrix consisting of the eigenvectors of $W_{B}^{-1}$. Let $q = U^Tv \in \mathbb{R}^n$. Then, $\|q\|_{\infty} \leq \overline{q} \coloneqq \|U\|_1 \overline{v}$, and \eqref{eq:ED_v} implies
\begin{equation} \label{eq:P3}
E_{D}^{*}(x_0, t_f) \leq E_{N}^{*}(x_0, t_f) + \sup_{\|q\|_{\infty} \leq \overline{q}} \left[2q^T p + q^T\Lambda q\right],
\end{equation}
where $p = \Lambda U^Te^{At_f}x_0 \in \mathbb{R}^n$. Note that $\sup_{\|q\|_{\infty} \leq \overline{q}}~ q^Tp = \overline{q}\|p\|_1,$ using the fact that the $1$-norm is dual to the $\infty$-norm \cite[Chapter 5]{CVX}. The optimal $q$ for this problem is $q^* = \overline{q}~\sign(p)$, where the $\sign(\cdot)$ function operates elementwise on the vector $p$. Note that $q^* = \overline{q}~\sign(p)$ also maximizes the term $q^T\Lambda q$ in \eqref{eq:P3} by forcing $q_{i}^{2} = \overline{q}^2$ for each element $q_i$ of $q$, since $\lambda_i > 0$ for $i = 1, \ldots, n$. It then follows that:
\begin{align*}
E_{D}^{*}(x_0, t_f) &\leq E_{N}^{*}(x_0, t_f) + 2\overline{q}\|\Lambda U^Te^{At_f}x_0\|_1 + \overline{q}^2\sum_{i = 1}^{n} \lambda_i.
\end{align*}
\end{proof}

\begin{remark}
We note that the optimal control signal \eqref{eq:Disturbed_u} and its minimum energy \eqref{eq:uD2} depend directly on $w(t)$, whereas the upper bound $\overline{E}_{D}(x_0, t_f)$ in \eqref{eq:ED} depends only on $\overline{w}$. 
\end{remark}

The upper bound in \eqref{eq:ED} may be quite conservative, particularly due to the use of Jensen's inequality and the sub-multiplicative property of norms. Obtaining improved bounds is an important avenue for future work. However, we show in Section \ref{sec:Example} that this upper bound is a more accurate approximation for $\|u_D\|_{\mathcal{L}_2}^{2}$ as $t_f$ becomes smaller. We now introduce two metrics to quantify the larger value of the disturbed energy \eqref{eq:ED} compared to the nominal energy \eqref{eq:EN}.

\section{Cost of Disturbance Metrics} \label{sec:Metrics}
Recall that we aim to quantify the additional control energy required to achieve finite-time stabilization under the actual dynamics \eqref{eq:Disturbed}, compared to the nominal dynamics \eqref{eq:Nominal}. We now use the expressions for the nominal \eqref{eq:EN} and disturbed energies \eqref{eq:ED} obtained in the previous section, to define two metrics. First, define an \emph{additive metric}:
\begin{equation} \label{eq:Additive}
	r_A(t_f) \coloneqq \sup_{\|x_0\|_2 \leq R} E_{D}^{*}(x_0, t_f) - E_{N}^{*}(x_0, t_f),
\end{equation}
which is directly a measure of how much additional energy is required. In particular, the actual system \eqref{eq:Disturbed} uses at most $r_A(t_f)$ more energy than the nominal system \eqref{eq:Nominal} to achieve finite-time stabilization in time $t_f$, assuming optimal control signals are used. The quantity $R > 0$ refers to a limit on the distance of the initial condition from the origin, and the constraint $\|x_0\|_2 \leq R$ is required to ensure that $r_A(t_f)$ is bounded. We then have the following result.

\begin{proposition}[Additive Metric] \label{prop:rA}
The additive metric $r_A(t_f)$ is bounded above as follows:
\begin{equation} \label{eq:rA_bound}
	r_A(t_f) \leq c + \gamma R\sqrt{n},
\end{equation}
where $c = \overline{q}^2\sum_{i = 1}^{n} \lambda_i$, $\gamma = 2\overline{q}\|\Lambda U^Te^{At_f}\|_1$ and $n$ is the dimension of $x_0$.
\end{proposition}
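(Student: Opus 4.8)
The plan is to combine the upper bound on the disturbed energy from Proposition~\ref{prop:ED} with the elementary equivalence between the $1$- and $2$-norms on $\mathbb{R}^n$. First I would rearrange \eqref{eq:ED} to isolate the energy gap: for every $x_0$ with $\|x_0\|_2 \leq R$,
\[
E_{D}^{*}(x_0, t_f) - E_{N}^{*}(x_0, t_f) \leq 2\overline{q}\,\bigl\|\Lambda U^Te^{At_f}x_0\bigr\|_1 + \overline{q}^2\sum_{i=1}^{n}\lambda_i ,
\]
where the last term is the constant $c$, independent of $x_0$. Taking the supremum of both sides over the ball $\|x_0\|_2 \leq R$ then reduces the claim to bounding $\sup_{\|x_0\|_2 \leq R}\bigl\|\Lambda U^Te^{At_f}x_0\bigr\|_1$.

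For that term I would use the sub-multiplicative property of the induced $1$-norm, $\bigl\|\Lambda U^Te^{At_f}x_0\bigr\|_1 \leq \bigl\|\Lambda U^Te^{At_f}\bigr\|_1\,\|x_0\|_1$, followed by the bound $\|x_0\|_1 \leq \sqrt{n}\,\|x_0\|_2$. The latter is a direct consequence of the Cauchy--Schwarz inequality, since $\|x_0\|_1 = \sum_{i=1}^{n}|(x_0)_i|\cdot 1 \leq \bigl(\sum_{i=1}^n |(x_0)_i|^2\bigr)^{1/2}\bigl(\sum_{i=1}^n 1\bigr)^{1/2} = \sqrt{n}\,\|x_0\|_2$. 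As $\|x_0\|_2 \leq R$, this gives $\bigl\|\Lambda U^Te^{At_f}x_0\bigr\|_1 \leq \sqrt{n}\,R\,\bigl\|\Lambda U^Te^{At_f}\bigr\|_1$, so that $\sup_{\|x_0\|_2 \leq R} 2\overline{q}\,\bigl\|\Lambda U^Te^{At_f}x_0\bigr\|_1 \leq \gamma R\sqrt{n}$ with $\gamma = 2\overline{q}\bigl\|\Lambda U^Te^{At_f}\bigr\|_1$. Adding the constant $c$ back recovers \eqref{eq:rA_bound}.

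I do not expect a genuine obstacle here: every step is a single application of a standard inequality, and the result follows immediately once Proposition~\ref{prop:ED} is in hand. The only points worth a sentence of care are that the supremum over $x_0$ need not be attained (an upper bound is all that is claimed, so this is harmless) and that the constant $\sqrt{n}$ coming from the $\ell_1$--$\ell_2$ equivalence is generally not tight; as already noted after Proposition~\ref{prop:ED}, the compounded use of Jensen's inequality, sub-multiplicativity, and norm equivalence makes the overall bound potentially conservative, and sharpening it is left to future work.
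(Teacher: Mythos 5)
Your proposal is correct and follows exactly the paper's own (one-line) argument: rearrange the bound of Proposition~\ref{prop:ED}, apply the sub-multiplicative property of the induced $1$-norm, and use $\|x_0\|_1 \leq \sqrt{n}\|x_0\|_2$ over the ball $\|x_0\|_2 \leq R$. The additional detail you supply (the Cauchy--Schwarz derivation of the norm equivalence and the remark on conservativeness) is consistent with the paper and adds nothing that conflicts with it.
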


\begin{proof}
The proof follows from \eqref{eq:ED}, the sub-multiplicative property of norms and the inequality $\|x_0\|_1 \leq \sqrt{n}\|x_0\|_2$. 
\end{proof}

Similarly, we define a \emph{multiplicative metric}:
\begin{equation} \label{eq:Multiplicative}
	r_M(t_f) \coloneqq \inf_{\|x_0\|_2 \geq R} ~\frac{E_{N}^{*}(x_0, t_f)}{E_{D}^{*}(x_0, t_f)},
\end{equation}
providing a multiplicative measure of the increased energy required. For instance, if $r_M(t_f) \geq 1/3$, then \emph{at most} 3 times the control energy is required to achieve finite-time stabilization in time $t_f$, assuming optimal control signals are used. The constraint $\|x_0\|_2 \geq R$ ensures that this metric is non-trivial for initial conditions arbitrarily close to the origin. We also note that $r_M(t_f)$ is bounded above by 1. The following result then holds.

\begin{proposition}[Multiplicative Metric] \label{prop:rM}
The multiplicative metric $r_M(t_f)$ is bounded below as follows:
\begin{equation} \label{eq:rM_bound}
r_M(t_f) \geq \frac{lR^2}{lR^2 + \gamma R\sqrt{n} + c},
\end{equation}
where $c = \overline{q}^2\sum_{i = 1}^{n} \lambda_i$, $\gamma = 2\overline{q}\|\Lambda U^Te^{At_f}\|_1$, $l = \lambda_{\min}\left(e^{A^Tt_f}W_{B}^{-1}e^{At_f}\right)$ and $n$ is the dimension of $x_0$.
\end{proposition}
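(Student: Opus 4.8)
The plan is to reduce the infimum in \eqref{eq:Multiplicative} to a one-dimensional estimate: bound the ratio $E_N^*/E_D^*$ pointwise in $x_0$, and then minimize the resulting scalar expression over $\|x_0\|_2\ge R$.

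First, fix any $x_0$ with $\|x_0\|_2\ge R$. For the numerator, observe that $e^{A^Tt_f}W_B^{-1}e^{At_f}$ is symmetric and positive definite (being congruent to $W_B^{-1}\succ0$), so the eigenvalue inequality recalled in the preliminaries, applied to \eqref{eq:EN}, gives $E_N^*(x_0,t_f)\ge l\|x_0\|_2^2$ with $l=\lambda_{\min}\!\left(e^{A^Tt_f}W_B^{-1}e^{At_f}\right)>0$; in particular $E_N^*>0$. For the denominator, I would start from \eqref{eq:ED} of Proposition~\ref{prop:ED}, apply the sub-multiplicative property of norms to $\|\Lambda U^Te^{At_f}x_0\|_1$, and use $\|x_0\|_1\le\sqrt n\,\|x_0\|_2$ to obtain $E_D^*(x_0,t_f)\le E_N^*(x_0,t_f)+\gamma\sqrt n\,\|x_0\|_2+c$ with $\gamma$ and $c$ exactly as in the statement. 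Taking $w\equiv0$ in \eqref{eq:ED_def} also shows $E_D^*\ge E_N^*>0$, so every ratio below is well-defined.

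Combining the two bounds is the key step. Writing $a=E_N^*$, $d=\gamma\sqrt n\,\|x_0\|_2+c\ge0$ and $a'=l\|x_0\|_2^2>0$, and using $a\ge a'$,
\begin{align*}
\frac{E_N^*(x_0,t_f)}{E_D^*(x_0,t_f)}\;&\ge\;\frac{a}{a+d}\;=\;\frac{1}{1+d/a}\\
&\ge\;\frac{1}{1+d/a'}\;=\;\frac{l\|x_0\|_2^2}{l\|x_0\|_2^2+\gamma\sqrt n\,\|x_0\|_2+c}.
\end{align*}
It then remains to minimize the right-hand side over $s\coloneqq\|x_0\|_2\ge R$. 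I would show that $f(s)=ls^2/(ls^2+\gamma\sqrt n\,s+c)$ is nondecreasing on $(0,\infty)$: clearing the (positive) denominators, $f(s)\ge f(R)$ for $s\ge R$ reduces to $l\gamma\sqrt n\,sR(s-R)+lc(s^2-R^2)\ge0$, which is immediate since $l>0$, $\gamma,c\ge0$ and $s\ge R>0$. Hence every pointwise bound is at least $f(R)$, and taking the infimum over $\|x_0\|_2\ge R$ yields \eqref{eq:rM_bound}.

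I do not expect a genuine obstacle here; the only points needing care are the directions of the inequalities when passing to reciprocals and when replacing $a$ by its lower bound $a'$, and verifying that all denominators are strictly positive, which follows from $x_0\neq0$ and $W_B^{-1}\succ0$. As an alternative to the monotonicity step, one could substitute $s\ge R$ into the last display directly via the same cross-multiplication estimate.
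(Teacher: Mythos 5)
Your proof is correct and follows essentially the same route as the paper's: both start from the upper bound on $E_D^*$ in \eqref{eq:ED}, apply sub-multiplicativity and $\|x_0\|_1\le\sqrt n\,\|x_0\|_2$ to get $E_D^*\le E_N^*+\gamma\sqrt n\,\|x_0\|_2+c$, lower-bound $E_N^*$ by $l\|x_0\|_2^2$, and then observe that the resulting scalar bound is worst at $\|x_0\|_2=R$. Your write-up is merely more explicit about the positivity of the denominators, the direction of the inequality when replacing $E_N^*$ by $l\|x_0\|_2^2$, and the monotonicity step, all of which the paper leaves implicit.
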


\begin{proof}
Using \eqref{eq:ED}, we have
$$
r_M(t_f) \geq \frac{1}{1 + \sup_{\|x_0\|_2 \geq R} \frac{\gamma\|x_0\|_1 + c}{x_{0}^{T}e^{A^Tt_f}W_{B}^{-1}e^{At_f}x_0}}.
$$
We also have
\begin{align*}
&\sup_{\|x_0\|_2 \geq R} \frac{\gamma\|x_0\|_1 + c}{x_{0}^{T}e^{A^Tt_f}W_{B}^{-1}e^{At_f}x_0} \nonumber \\
\leq &\sup_{\|x_0\|_2 \geq R} \frac{\gamma\sqrt{n}\|x_0\|_2 + c}{l\|x_0\|_{2}^{2}} = \frac{\gamma\sqrt{n}}{lR} + \frac{c}{lR^2},
\end{align*}
where $l = \lambda_{\min}\left(e^{A^Tt_f}W_{B}^{-1}e^{At_f}\right) > 0$, since $W_B$ is positive definite and $e^{At_f}$ is non-singular. We use the inequalities $\|x_0\|_1 \leq \sqrt{n}\|x_0\|_2$ and $x^TPx \geq \lambda_{\min}(P)\|x\|_{2}^{2}$ for a positive-definite matrix $P$, and the result follows.
\end{proof}

From \eqref{eq:rM_bound}, we note that as the distance from the origin $R$ increases, the lower bound on the metric $r_M(t_f)$ also increases and converges to $1$ as $R \to \infty$. This behavior indicates that for initial conditions farther from the origin, the disturbed energy exceeds the nominal energy by a \emph{relatively} smaller amount, than for initial conditions closer to the origin. Intuitively, an initial condition farther from the origin already requires a significant amount of energy to achieve the specification for the nominal system. The additional energy required to compensate the external disturbance $w(t)$ is thus relatively less impactful. 

For large distances of the initial condition from the origin, the nominal and disturbed energies are large, and their difference may be large even if they have the same order of magnitude. Thus, an additive metric does not represent the cost of disturbance accurately, but a multiplicative metric does. Conversely, for small distances of the initial condition from the origin, the nominal and disturbed energies are small, and an additive metric represents the cost of disturbance more accurately. This feature is closely related to the notions of relative and absolute error in numerical analysis \cite{BT}.

\begin{figure}[!t]
	\centering
	\includegraphics[width = 0.4\textwidth]{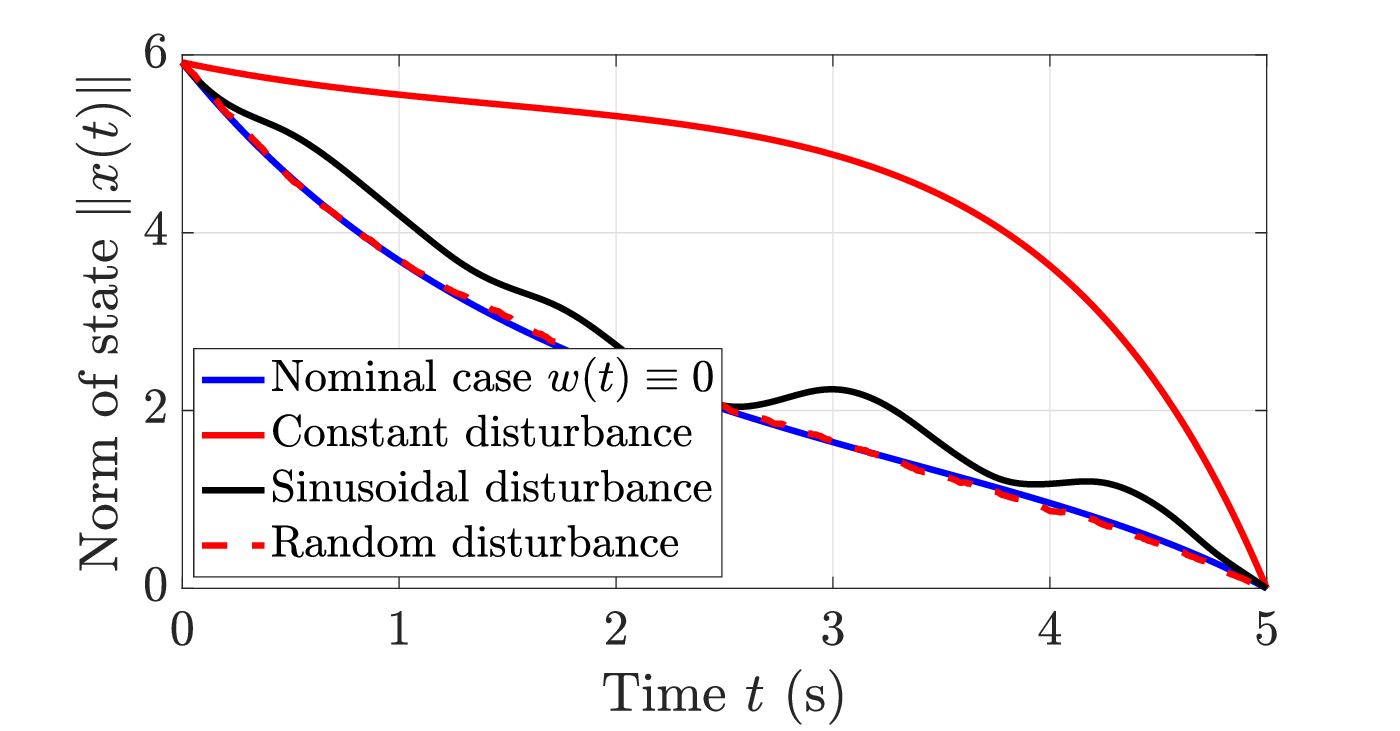}
	\caption{Norm of the state vector, $\|x(t)\|$ with time $t$ under the control laws \eqref{eq:Nominal_u} and \eqref{eq:Disturbed_u}. The states stabilize to the origin at $t_f = 5\mathrm{s}$.}
	\label{fig:states}
	\vspace{-0.3cm}
\end{figure}

\section{Simulation Example} \label{sec:Example}
In this section, we illustrate the use of our metrics on the ADMIRE fighter jet model \cite{SDRA}, a widely-used application for control frameworks \cite{Ola05, BO22b}. We consider only the subsystem associated with control actions, where the states are the roll, pitch and yaw rates, denoted $p$, $q$ and $r$ respectively, all in $\mathrm{rad/s}$. This subsystem has four inputs, corresponding to the deflections (in radians) of the canard wings, the left and right elevons and the rudder. A linearized model \eqref{eq:Nominal} for this subsystem was established in \cite{Ola05}, with:
$$
x = \begin{bmatrix} p \\ q \\ r \end{bmatrix}; \hspace{1em} A = \begin{bmatrix} -0.9967 & 0 & 0.6176 \\ 0 & -0.5057 & 0 \\ -0.0939 & 0 & -0.2127 \end{bmatrix};
$$
$$
B = \begin{bmatrix} 0 & -4.2423 & 4.2423 & 1.4871 \\ 1.6532 & -1.2735 & -1.2735 & 0.0024 \\ 0 & -0.2805 & 0.2805 & -0.8823 \end{bmatrix}.
$$
We note that the control inputs in the linearized model correspond to deviations from the equilibrium control inputs. In a practical setting, the actual control energies include the energy contributed by the equilibrium control inputs. In this example, we illustrate the use of our metrics by calculating control energies considering only the deviations.

\begin{figure}[!t]
	\centering
	\includegraphics[width = 0.4\textwidth]{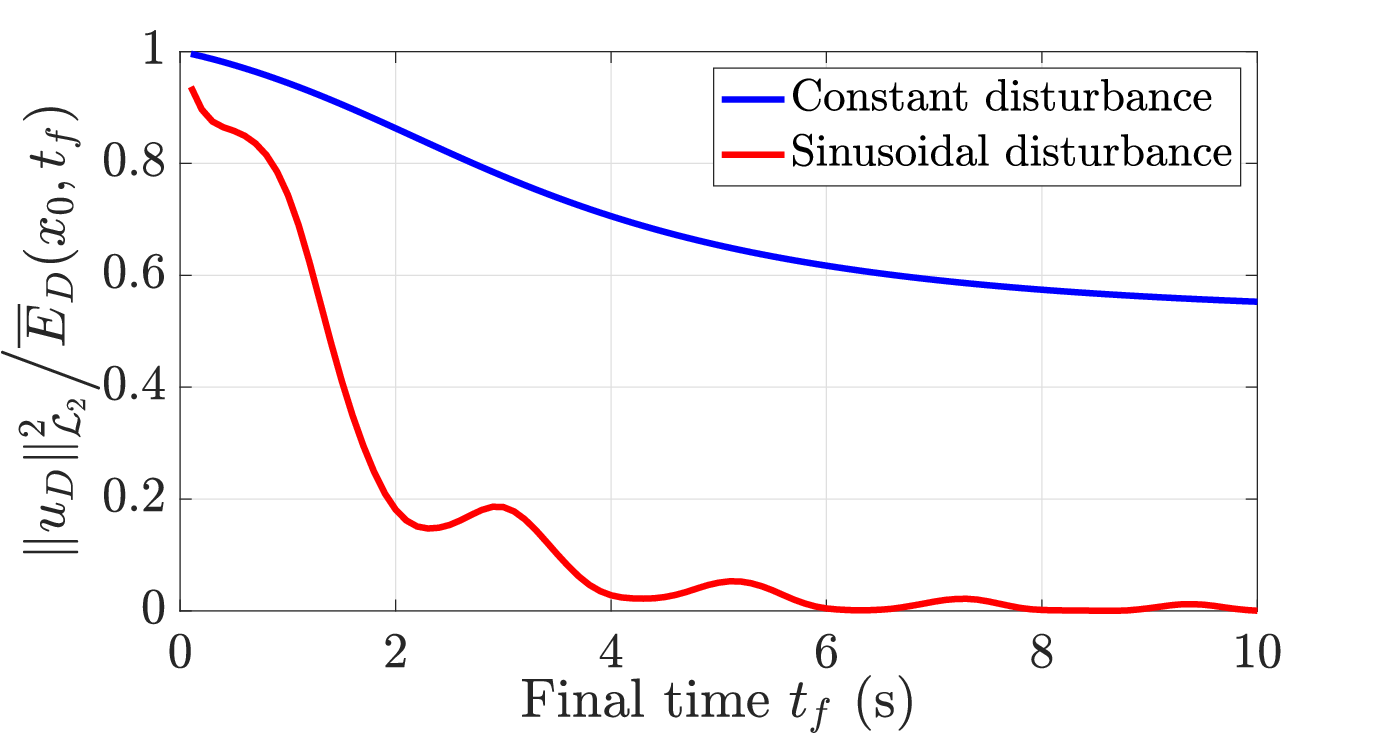}
	\caption{The ratio of $\|u_D\|_{\mathcal{L}_2}^{2}$ to $\overline{E}_{D}(x_0, t_f)$ in \eqref{eq:ED} as a function of the final time $t_f$, for two classes of disturbances.}
	\label{fig:uDvsED}
	\vspace{-0.5cm}
\end{figure}

We first illustrate the performance of the control laws \eqref{eq:Nominal_u} and \eqref{eq:Disturbed_u} in stabilizing this system. Note that these are defined only on the interval $[0, t_f]$. We use $x_0 = [5, -1, 3]^T$, $t_f = 5\mathrm{s}$ and $\overline{w} = 1$, and choose three classes of disturbances: (a) a constant, full-amplitude disturbance of $\pm\overline{w}$ in each input component, (b) different high-frequency sinusoids of amplitude $\overline{w}$ in each input component, and (c) a uniformly random disturbance $w(t) \in [-\overline{w}, \overline{w}]^3$. To illustrate the performance of the control law \eqref{eq:Disturbed_u}, we assume that each of these are known \emph{a priori}. Figure~\ref{fig:states} shows the norm of the state vector $\|x(t)\|$ as a function of time $t$ for different classes of disturbances. As expected, the states stabilize to the origin at the given final time. While the constant and sinusoidal disturbances significantly impact the trajectory of the state, we note that the uniformly random disturbance has a much smaller effect over the full time interval. 

Next, we demonstrate the accuracy of the bound \eqref{eq:ED} when the final time $t_f$ is varied. In particular, we consider the ratio of the actual energy $\|u_D\|_{\mathcal{L}_2}^{2}$ to the upper bound on the disturbed energy in \eqref{eq:ED}, denoted $\overline{E}_D(x_0, t_f)$. In Fig.~\ref{fig:uDvsED}, we plot this quantity as a function of the final time $t_f$. Note that $\|u_D\|_{\mathcal{L}_2}^{2} \leq \overline{E}_{D}(x_0, t_f)$ by definition, and this bound is more accurate when the ratio $\frac{\|u_D\|_{\mathcal{L}_2}^{2}}{\overline{E}_D(x_0, t_f)}$ is closer to $1$. We use the constant, full-amplitude disturbance and the high-frequency sinusoidal disturbance chosen in Fig.~\ref{fig:states}, with the same initial condition $x_0 = [5, -1, 3]^T$. It is evident that for small final times $t_f$, the bound is very accurate, and the accuracy is lowered as $t_f$ increases. Such small final times can correspond to quick maneuvering tasks performed by fighter jets. Further, the bound is more accurate for the constant disturbance compared to the sinusoidal disturbance, and this trend holds for other sets of initial conditions too.

\begin{figure}[!t]
	\centering
	\includegraphics[width = 0.4\textwidth]{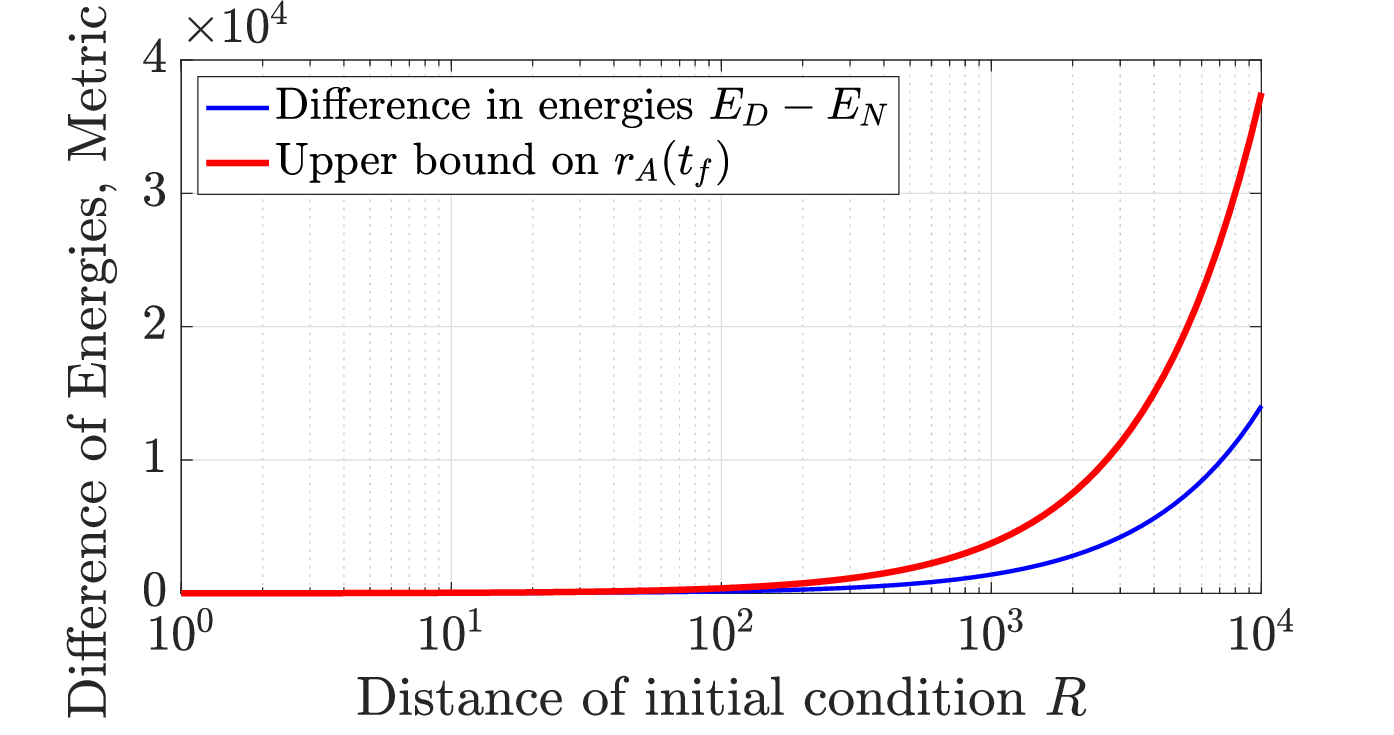}
	\caption{The bound on $r_A(t_f)$ from \eqref{eq:rA_bound}, and the difference in energies from \eqref{eq:EN} and \eqref{eq:ED}, as a function of the distance of the initial condition $R$.}
	\label{fig:rAvsR}
	\vspace{-0.3cm}
\end{figure}

\begin{figure}[!t]
	\centering
	\includegraphics[width = 0.4\textwidth]{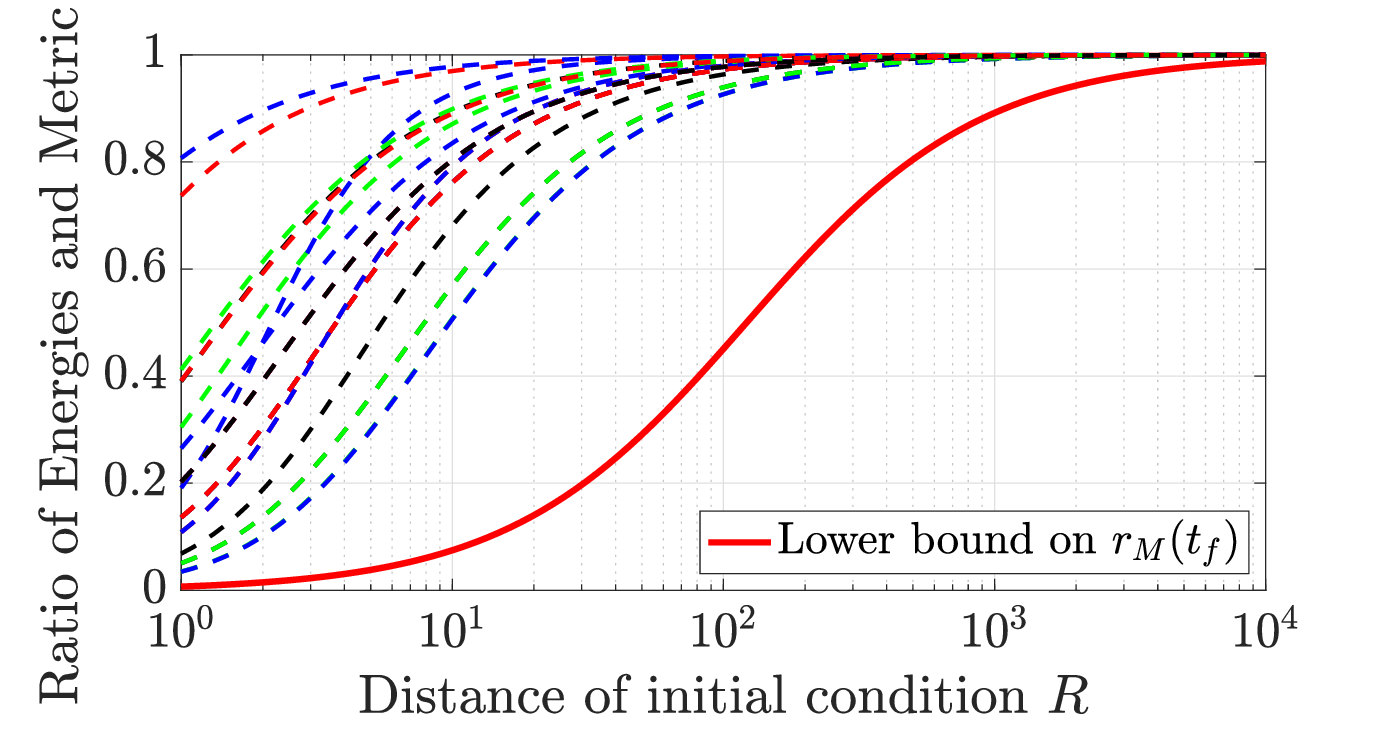}
	\caption{The bound on the $r_M(t_f)$ from \eqref{eq:rM_bound}, and the ratio of energies from \eqref{eq:EN} and \eqref{eq:uD2} for a large variety of disturbances, as a function of the distance of the initial condition $R$.}
	\label{fig:rMvsR}
	\vspace{-0.65cm}
\end{figure}

Figure \ref{fig:rAvsR} shows the difference between the nominal and disturbed energies from \eqref{eq:EN} and \eqref{eq:ED} as a function of the distance $R$ of the initial condition from the origin. Note that this quantity is independent of $w(t)$, and depends only on the bound $\overline{w}$. We fix $t_f = 0.5\mathrm{s}$ since Fig.~\ref{fig:uDvsED} shows that $\overline{E}_{D}(x_0, t_f)$ is a good approximation for $\|u_D\|_{\mathcal{L}_2}^{2}$ for such small $t_f$. For simplicity of notation, we denote $E_N \equiv E_{N}^{*}(x_0, t_f)$ and $E_D \equiv \overline{E}_D(x_0, t_f)$. We also plot the additive metric bound \eqref{eq:rA_bound} as a function of $R$, and note that $r_A(t_f)$ indeed bounds the difference in disturbed and nominal energies from above. Similarly, Fig.~\ref{fig:rMvsR} plots the ratio $\frac{E_{N}^{*}(x_0, t_f)}{\|u_D\|_{\mathcal{L}_2}^{2}}$ from \eqref{eq:EN} and \eqref{eq:uD2}, as a function of $R$ for a large variety of disturbances, fixing $t_f = 0.5\mathrm{s}$. These disturbances include different classes of high-frequency sinusoids and constant disturbances. We also plot the multiplicative metric bound \eqref{eq:rM_bound} as a function of $R$, and note that $r_M(t_f)$ bounds the ratio of nominal and disturbed energies from below. 

However, it can be seen that the bounds in Figures~\ref{fig:rAvsR} and~\ref{fig:rMvsR} are quite conservative. For instance, when $R = 10^2$ in Fig.~\ref{fig:rMvsR}, we have $r_M(t_f) = 0.45$ which indicates that \emph{at most} $1/0.45 \approx 2.22$ times the nominal energy is required to achieve finite-time stabilization, assuming optimal control signals are used. However, the actual ratio of energies is larger, ranging from $0.9$ to $1$, indicating that at most only $1/0.9 \approx 1.11$ times the nominal energy is required. A similar comment can be made from Fig.~\ref{fig:rAvsR} for the additive metric. We subsequently note that the additive metric is more representative when $R$ is small, and the multiplicative metric is more representative when $R$ is large.

\begin{figure}[!t]
	\centering
	\includegraphics[width = 0.4\textwidth]{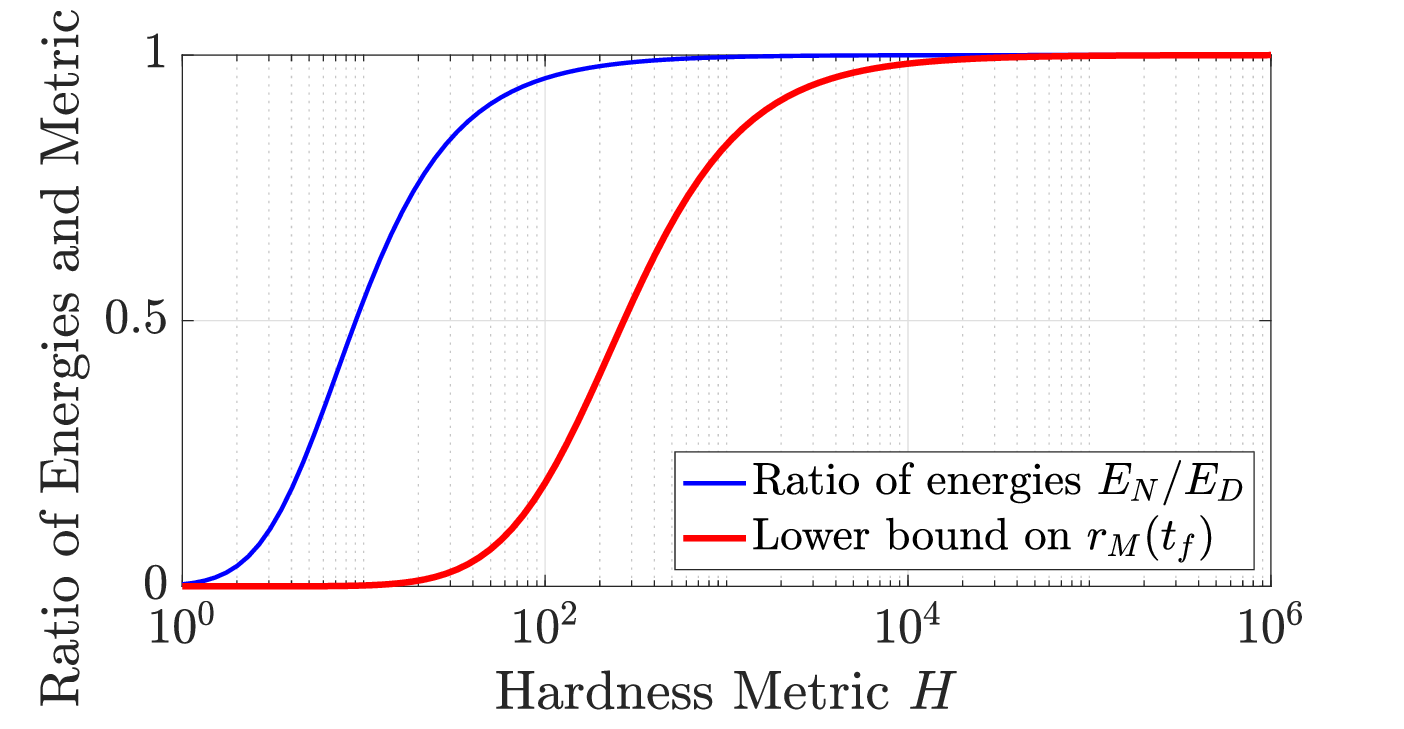}
	\caption{The multiplicative metric $r_M(t_f)$ from \eqref{eq:rM_bound} and the ratio of energies from \eqref{eq:EN} and \eqref{eq:ED} as a function of the hardness metric defined in \eqref{eq:Hardness}.}
	\label{fig:rMvsH}
	\vspace{-0.6cm}
\end{figure}

Finally, we define a measure of \emph{hardness} of a task:
\begin{equation} \label{eq:Hardness}
H = \frac{R}{t_f}.
\end{equation}
Larger values of $H$ intuitively imply that the task of finite-time stabilization is `harder', either due to a large distance of the initial condition $R$ or short final time $t_f$. Fig.~\ref{fig:rMvsH} illustrates the lower bound on $r_M(t_f)$ and the ratio $\frac{E_N}{E_D}$ as a function of this hardness metric. As $H$ increases, the bound on $r_M(t_f)$ is more accurate, indicating that the multiplicative metric is more representative for more difficult tasks which have either a large distance of the initial condition $R$ or short final time $t_f$. This behavior is closely related to the discussion of the multiplicative metric in Section \ref{sec:Metrics}.

\section{Concluding Remarks} \label{sec:Conclusion}
This paper introduced metrics to quantify the maximal additional energy required to achieve a given stabilization task in a finite time interval, in the presence of system disturbances. Motivated by problems of fuel capacity design in control systems, we first derive an upper bound on the worst-case energy over all disturbances. This quantity is compared to the nominal energy to achieve the same task in the absence of disturbances, using additive and multiplicative metrics. Simulation examples on a fighter jet model demonstrate that the metrics we define are practically useful particularly for tasks where the initial condition is far from the origin, or when the time interval is short. An important avenue for future work is to improve the tightness of the bounds we derive, in particular \eqref{eq:ED}, \eqref{eq:rA_bound} and \eqref{eq:rM_bound}. Characterizing the actual disturbance that maximizes the additional energy required is also an interesting future problem.

\balance

\section*{Acknowledgments}
The authors thank Elena Fern\'andez Bravo for useful comments and discussions.

\bibliographystyle{IEEEtran}
\bibliography{references}

\end{document}